\documentclass[conference]{IEEEtran}

\usepackage{amsmath,amsfonts,bm}

\def\eqref#1{(\ref{#1})}

\def\1{\bm{1}}

\newcommand{\bx}{{\mathbf x}}

\newcommand{\bu}{{\mathbf u}}

\newcommand{\bz}{{\mathbf z}}

\newcommand{\bA}{{\mathbf A}}
\newcommand{\bD}{{\mathbf D}}

\newcommand{\bL}{{\mathbf L}}

\newcommand{\bLambda}{{\boldsymbol \Lambda}}

\newcommand{\mE}{{\mathbb E}}

\newcommand{\bB}{{\mathbf B}}

\newcommand{\cE}{{\mathcal E}}

\newcommand{\cG}{{\mathcal G}}

\newcommand{\cV}{{\mathcal V}}

\def\bU{{\bf U}}

\def\minwrt[#1]{\underset{#1}{\text{minimize}}}
\def\maxwrt[#1]{\underset{#1}{\text{max}}}

\ifCLASSINFOpdf
\else
   \usepackage[dvips]{graphicx}
\fi
\usepackage{url}

\usepackage{amsmath,amssymb,euscript,yfonts,psfrag,latexsym,graphicx}
\usepackage{bbm,color,amstext,wasysym,parskip}
\graphicspath{{./},{../figures/},{./figures/}}
\usepackage{caption}
\usepackage{subcaption}
\usepackage{float}
\usepackage{array}
\usepackage{url}
\usepackage{algorithm}
\usepackage{algorithmic}
\usepackage{xcolor}

\usepackage{amsthm}

\usepackage{graphicx}

\newtheorem{prop}{Proposition}

\newtheorem{property}{Property}

\usepackage{soul}

\begin{document}



\title{Graph Bispectrum for Nonlinear Mode Interactions
}

\author{\IEEEauthorblockN{Rahul Singh, Reza Abiri, Walter Besio}
\IEEEauthorblockA{\textit{Department of Electrical, Computer, and Biomedical Engineering} \\
\textit{University of Rhode Island}\\
Kingston, USA \\}
}

\maketitle

\begin{abstract}
We introduce a graph bispectrum formulation for characterizing higher-order interactions in graph signals. While conventional graph spectral methods capture only second-order structure, many graph signals exhibit nonlinear interactions that are not reflected in covariance or graph power spectra. Motivated by classical higher-order spectral analysis, we define a graph bispectrum tensor based on third-order moments of graph Fourier coefficients and derive a compact graph bicoherence measure that summarizes nonlinear mode interactions in a low-dimensional and scale-invariant form. We establish key properties of the proposed quantities, including vanishing third-order moments for Gaussian graph signals and a dynamical interpretation in terms of nonlinear mode coupling. Experiments on synthetic random graph signals demonstrate that the proposed measures detect complementary nonlinear dependencies even when second-order statistics are similar. We further apply the method to EEG recordings from the CHB-MIT Scalp EEG Database and show that ictal activity exhibits substantially increased nonlinear graph spectral coupling compared to interictal periods. The proposed approach provides an interpretable and computationally efficient tool for higher-order interaction analysis for graph signals.
\end{abstract}

\begin{IEEEkeywords}
Graph Bispectrum, Graph Signal Processing, Non-linear interaction
\end{IEEEkeywords}

\section{Introduction}

\IEEEPARstart{H}{igher}-order spectral methods such as the bispectrum and bicoherence have been extensively used to detect nonlinear interactions and phase coupling between frequency components~\cite{Bri91,SwaGiaZho97,ColWhiHam98}. The bispectrum captures third-order correlations in the Fourier domain and provides information about frequency coupling beyond conventional power spectra. These methods have been successfully applied in areas including radar, communications, geophysics, and biomedical signal analysis.

Graph signal processing (GSP)~\cite{CheShiWri20,ManChaSin18,SinChaMan16} provides a framework for analyzing signals defined on irregular domains such as social networks, sensor arrays, and brain connectivity graphs. By leveraging the eigendecomposition of the graph Laplacian, the graph Fourier transform (GFT) enables a notion of frequency that generalizes classical spectral analysis to graph-structured data. However, most existing graph spectral methods rely primarily on second-order statistics, such as covariance and graph power spectra, which describe only pairwise correlations and energy distributions. While effective for linear systems, these quantities do not capture nonlinear interactions between graph modes that arise in many complex dynamical systems. Such higher-order interactions are especially relevant in applications such as neuroscience, where large-scale brain activity often exhibits nonlinear synchronization and cross-scale coupling.

Extending bispectral analysis to graph signals is nontrivial. Graph frequencies, given by Laplacian eigenvalues, are generally irregular and do not admit a natural frequency addition. Consequently, classical bispectrum formulations cannot be directly applied to graph domains. This creates a gap in GSP: while graph spectral methods provide a notion of frequency, there is no established framework for analyzing higher-order interactions between graph modes. In this paper, we address this by introducing a graph bispectrum formulation for signals defined on graphs. We first define a graph bispectrum tensor that captures third-order interactions between graph Fourier coefficients. To obtain a practical and interpretable representation, we propose a compact graph bicoherence, which aggregates nonlinear interactions onto each graph mode through projection of a quadratic signal and normalization. This yields a low-dimensional, scale-invariant measure of nonlinear interaction strength across graph frequencies. We further define a normalized graph bispectral energy as a global measure of higher-order interaction.

We validate the proposed formulation on both synthetic and real data. On synthetic signals generated over random graphs, we show that the proposed measures can detect nonlinear interactions even when graph spectra are similar. We then apply the method to scalp electroencephalogram (EEG) recordings from the CHB-MIT Scalp EEG Database~\cite{MIT_CHB}, demonstrating that ictal (seizure) activity exhibits significantly increased nonlinear coupling compared to interictal periods. These results indicate that higher-order graph spectral analysis provides information complementary to conventional graph spectral methods.

\textbf{Related Work:} Higher-order spectral analysis, particularly the bispectrum, has long been used to characterize nonlinear interactions and phase coupling in time-series signals. Extensions of bispectral analysis to non-Euclidean domains have been studied in group-theoretic settings, including bispectral invariants and skew-spectrum constructions for functions on groups and graphs~\cite{Kon08,KonBor08,Kak09,Kak12,MatMatSan24}. In parallel, graph signal processing has established a spectral framework based on Laplacian eigenvectors, enabling graph Fourier analysis, spectral filtering, and graph power spectra. However, existing graph spectral methods largely rely on second-order statistics and do not explicitly characterize nonlinear interactions between graph modes. The proposed approach defines a graph bispectrum directly in the graph Fourier domain through a harmonic interaction algebra induced by eigenvector multiplication, providing an interpretable and computationally tractable characterization of higher-order nonlinear mode coupling in graph signals.

\section{Background}
\label{sec:background}

\subsection{Classical Bispectrum}
\label{subsec:classical_bispectrum}
For a zero-mean, stationary time series $x(t)$ with Fourier transform $X(f)$, the bispectrum is defined as~\cite{SwaGiaZho97}
\begin{equation}
    \label{eq:bispect}
    B(f_1, f_2) = \mathbb{E}\big[X(f_1)\, X(f_2)\, X^*(f_1 + f_2)\big],
\end{equation}
where $(\cdot)^*$ denotes complex conjugation. The bispectrum measures third-order correlations in the frequency domain and captures phase coupling, i.e., interactions in which two frequency components combine to produce a third at their sum frequency.

A key property of the bispectrum is that it vanishes for linear Gaussian processes, making it a sensitive indicator of nonlinearity and departures from Gaussianity. Unlike the power spectrum, which depends only on magnitude, the bispectrum encodes phase relationships and thus provides insight into the structure of nonlinear interactions.

To obtain a scale-invariant measure, the bicoherence is defined as a normalized version of the bispectrum:
\begin{equation}
b(f_1, f_2) =
\frac{|B(f_1, f_2)|}
{\sqrt{\mathbb{E}[|X(f_1)X(f_2)|^2]\;\mathbb{E}[|X(f_1+f_2)|^2]}}.
\end{equation}
The bicoherence takes values in [0,1] and quantifies the strength of phase coupling independent of signal amplitude.

These definitions rely on the additive structure of the Fourier domain, where sums of frequencies are well defined. This enables nonlinear interactions to be interpreted as frequency coupling. In graph spectral domains, however, Laplacian eigenvalues do not generally exhibit a corresponding addition rule. This motivates a graph-specific formulation of bispectral analysis.

\subsection{Graph Frequency Analysis}
GSP is concerned with the generalization of classical signal processing concepts and tools to graph signals. GSP relates the vertex and spectral domains of a graph, much as classical signal processing connects the time and frequency domains of a time series~\cite{CheShiWri20,ManChaSin18}.
The eigenvalues and eigenvectors of the graph Laplacian provide a notion of frequency for signals defined on a graph. The graph Laplacian eigenvectors associated with low frequencies, vary slowly across the graph, i.e., if two vertices are connected by an edge, the values of the eigenvector at those locations are likely to be similar. The eigenvectors associated with larger eigenvalues oscillate more rapidly and are more likely to have dissimilar values on vertices connected by an edge. The graph Fourier transform and its inverse  give us a way to equivalently represent a signal in two different domains: the vertex domain and the graph spectral domain. 

Let $\cG = (\cV,\cE)$ be a graph, where $\cV$ is the set of $N$ number of nodes and $\cE$ is the set of edges. The adjacency matrix of the graph is denoted as $\bA\in \mathbb{R}^{N\times N}$ and has entries from $\{0,1\}$. A nonzero entry in $\bA$ indicates the presence of an edge between two nodes, i.e., $A_{ij} = 1$, if nodes $i$ and $j$ are connected, and $A_{ij} = 0$ otherwise. The graph Fourier (spectral) analysis relies on the spectral decomposition of graph Laplacians. The traditional combinatorial graph Laplacian is defined as $\bL = \bD - \bA$, with $\bD = \mathrm{diag}\{ d_1,d_2,\ldots, d_N \}$ and $d_i  = \sum_j A_{ij}$. Based on the eigendecomposition of the graph Laplacian $\bL = \bU \bLambda \bU^T$, where $\bU \in \mathbb{R}^{N\times N}$ comprises of orthonormal eigenvectors and $\bLambda = \mathrm{diag}\{\lambda_1, \ldots, \lambda_N\}$ is a diagonal matrix of eigenvalues, the graph Fourier transform is defined with eigenvectors of the graph Laplacian being the graph Fourier modes (harmonics) and the corresponding eigenvalues being the graph frequencies~\cite{ShuNarFro13}. Assuming $\lambda_1 \leq \lambda_2 \leq \ldots \leq \lambda_N$, $\lambda_1$ corresponds to the lowest (zero) frequency and $\lambda_N$ corresponds to the highest frequency of the graph. Let $\bx \in \mathbb{R}^{N}$ be a graph signal, then the graph Fourier transform (GFT) and the inverse Fourier transform are defined as $\hat{\bx} =  \bU^T \bx$ and $\bx = \bU \hat{\bx}$, respectively.  

\section{Main Results}
\label{sec:main}
Let $\bx \in \mathbb{R}^N$ be a random graph signal defined on an undirected graph with Laplacian eigendecomposition $\bL = \bU \Lambda \bU^\top$, where $\bU = [\bu_1,\ldots,\bu_N]$ contains orthonormal graph Fourier modes. The graph Fourier transform is $\hat{\bx} = \bU^\top \bx$, with coefficients $\hat{x}_k$. Conventional graph spectral analysis relies primarily on second-order statistics such as the graph power spectrum
$\mE \left[ |\hat{x}_k|^2 \right]$, which characterizes the distribution of signal energy across modes. More generally, second-order quantities of the form $\mE \left[  \hat{x}_i \hat{x}_j \right]$ 
capture pairwise correlations between graph modes. While useful for describing covariance and spectral energy distributions, such quantities do not characterize nonlinear interactions between graph modes. To address this limitation, we introduce higher-order graph spectral measures based on third-order moments.

We define the graph bispectrum tensor $\bB$ with its entries as 
\begin{equation}
B_{ijk} = \mathbb{E}[\hat{x}_i \hat{x}_j \hat{x}_k], \quad i,j,k = 1,2,\ldots,N,
\end{equation}
which captures third-order interactions between graph Fourier coefficients. Unlike classical Fourier analysis, graph frequencies do not admit a natural additive structure. Instead, interactions between graph modes are governed by the Laplacian eigenvectors themselves. In particular, interactions between graph modes are characterized through the pointwise product of graph harmonics
\begin{equation}
\bu_i \odot \bu_j = \sum_k C_{ijk} \bu_k,
\end{equation}
where
\begin{equation}
C_{ijk} = \sum_{n=1}^N u_i(n)u_j(n)u_k(n)
\end{equation}

are the mode interaction coefficients. The coefficient $C_{ijk}$ measures how strongly graph modes $i$ and $j$ combine to contribute to mode $k$, thereby replacing the role of frequency addition in the classical bispectrum. Its relation to classical bispectrum is given by the following proposition.

\begin{prop}
\label{prop:relation}
Consider a cycle graph with $N$ nodes, whose Laplacian eigenvectors coincide with the discrete Fourier basis. Then the graph harmonic interaction coefficients satisfy
\begin{equation}
C_{ijk} \neq 0
\quad \text{only if} \quad
k \equiv i+j \; (\mathrm{mod}\; N).
\end{equation}
Consequently, the proposed graph bispectrum recovers the classical frequency coupling structure underlying the conventional bispectrum.
\end{prop}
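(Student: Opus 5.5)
The plan is a direct computation of $C_{ijk}$ in the Fourier eigenbasis of the cycle, followed by a short argument that this selection rule reproduces the classical sum-frequency structure. First I fix the basis. The Laplacian of the cycle graph $C_N$ is circulant, so it is diagonalized by the discrete Fourier vectors
\[
u_k(n) = \tfrac{1}{\sqrt{N}}\, e^{2\pi \mathrm{i} k n/N}, \qquad k,n = 0,\ldots,N-1,
\]
with eigenvalues $\lambda_k = 2 - 2\cos(2\pi k/N)$. Two checks are routine. Writing $(\bL u_k)(n) = 2u_k(n) - u_k(n-1) - u_k(n+1)$ gives the eigenvalue relation. The geometric sum gives orthonormality.

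Since $\lambda_k = \lambda_{N-k}$, the eigenspaces are (mostly) two-dimensional. This means the Fourier basis is a legitimate, though not the only, choice of orthonormal graph Fourier modes, and I will state this explicitly. Because this basis is complex, the expansion $\bu_i \odot \bu_j = \sum_k C_{ijk}\bu_k$ forces the coefficients to be computed with the Hermitian inner product:
\[
C_{ijk} = \langle \bu_i \odot \bu_j, \bu_k\rangle = \sum_{n} u_i(n)u_j(n)u_k^*(n).
\]
This reduces to the paper's definition whenever the modes are real. I will note this interpretation as part of the proof.

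Next, the main computation:
\[
C_{ijk} = N^{-3/2}\sum_{n=0}^{N-1} e^{2\pi \mathrm{i}(i+j-k)n/N}.
\]
By the standard root-of-unity identity, the sum equals $N$ if $i+j-k \equiv 0 \pmod N$ and $0$ otherwise. Hence $C_{ijk} = N^{-1/2}\,\delta_{k,\,i+j \bmod N}$, which is the claimed selection rule and in fact slightly stronger, since it identifies the nonzero value.

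For the ``consequently'' part, I expand $\hat{x}_k = \sum_n x(n) u_k^*(n)$, so that $\hat x_k$ is the DFT coefficient $X(k)$ up to the factor $N^{-1/2}$. Weighting the tensor entries $\mE[\hat x_i \hat x_j \hat x_k^*]$ by $C_{ijk}$ then picks out only the triples $(i,\,j,\,i+j)$. This is exactly the classical bispectrum $B(f_1,f_2) = \mE[X(f_1)X(f_2)X^*(f_1+f_2)]$ from \eqref{eq:bispect}, evaluated on the discrete frequency grid. So the graph construction collapses onto the sum-frequency diagonal that defines classical phase coupling.

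I expect the main obstacle to be the basis and conjugation issue rather than the computation itself. If one insists on real eigenvectors, namely the cosines and sines spanning each eigenspace, then the products $\cos\cdot\cos$ and $\cos\cdot\sin$ generate both $i+j$ and $i-j$. In that case the correct rule is only $k \equiv \pm i \pm j \pmod N$, and the statement would be false as written. I will therefore carry out the proof in the complex Fourier basis, which the statement implicitly assumes when it says the eigenvectors ``coincide with the discrete Fourier basis''. I will then remark that any real basis is a unitary recombination of this basis within each degenerate eigenspace, which yields the symmetrized rule $k \equiv \pm i \pm j$.
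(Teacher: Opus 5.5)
Your argument is correct and is essentially the paper's proof: substitute the complex exponentials $u_k(n)=N^{-1/2}e^{2\pi \mathrm{i} kn/N}$ into $C_{ijk}=\sum_n u_i(n)u_j(n)u_k^*(n)$ and evaluate the root-of-unity sum. Your extra care is justified, for three reasons: the paper silently uses the conjugate $u_k^*$ that its definition of $C_{ijk}$ omits; its prefactor $1/N$ should be $N^{-3/2}$, so the nonzero value is $N^{-1/2}$ as you found, not $1$; and your observation that a real cosine/sine eigenbasis only gives $k\equiv \pm i \pm j \pmod N$ is a genuine caveat the paper does not address.
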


\begin{proof}
For a cycle graph, the graph Fourier modes are discrete complex exponentials of the form
\begin{equation*}
u_k(n)=\frac{1}{\sqrt{N}}e^{j2\pi kn/N}.
\end{equation*}

Substituting these eigenvectors into the interaction coefficient definition gives
\begin{align*}
C_{ijk} &=
\sum_{n=1}^{N}
u_i(n)u_j(n){u_k^*(n)}\\
&= \frac{1}{N}
\sum_{n=1}^{N}
e^{j2\pi(i+j-k)n/N}.
\end{align*}

This sum equals $1$ when $k = i+j \ (\mathrm{mod}\ N)$ and $0$ otherwise. Thus, graph harmonic interactions on a cycle graph obey the same frequency coupling relation as the classical Fourier basis, recovering the interaction structure underlying the conventional bispectrum.
\end{proof}

\begin{figure*}[h]
\vspace{-0.4cm}
\centering
\begin{subfigure}[t]{0.32\textwidth}
\centering	
\includegraphics[scale=0.33]{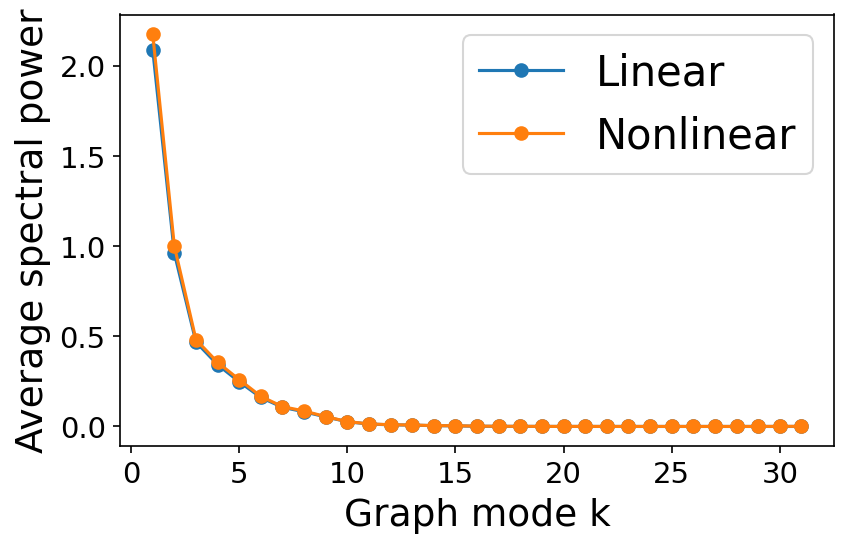}
\vspace{-0.2cm}
\caption{}
\label{fig:toy1a}
\end{subfigure}
\begin{subfigure}[t]{0.32\textwidth}
\centering	
\includegraphics[scale=0.33]{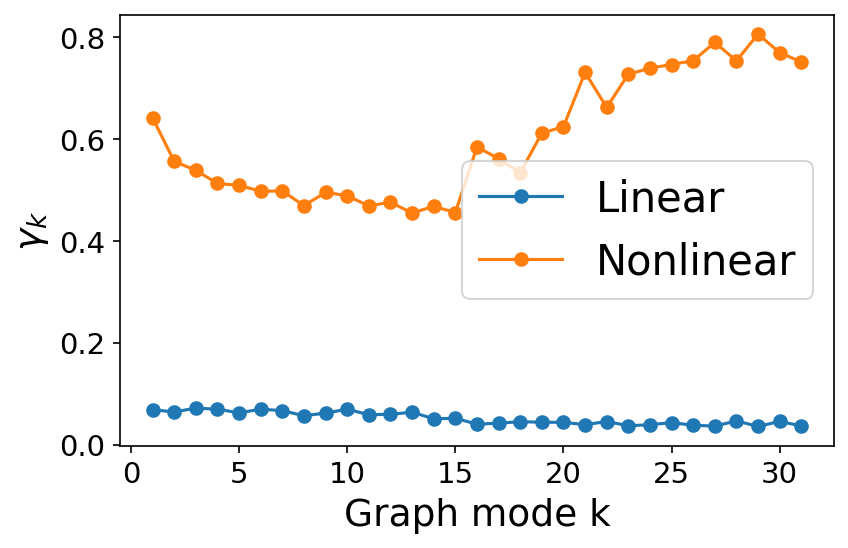}
\vspace{-0.2cm}
\caption{}
\label{fig:toy1b}
\end{subfigure}
\begin{subfigure}[t]{0.32\textwidth}
\centering
\includegraphics[scale=0.33]{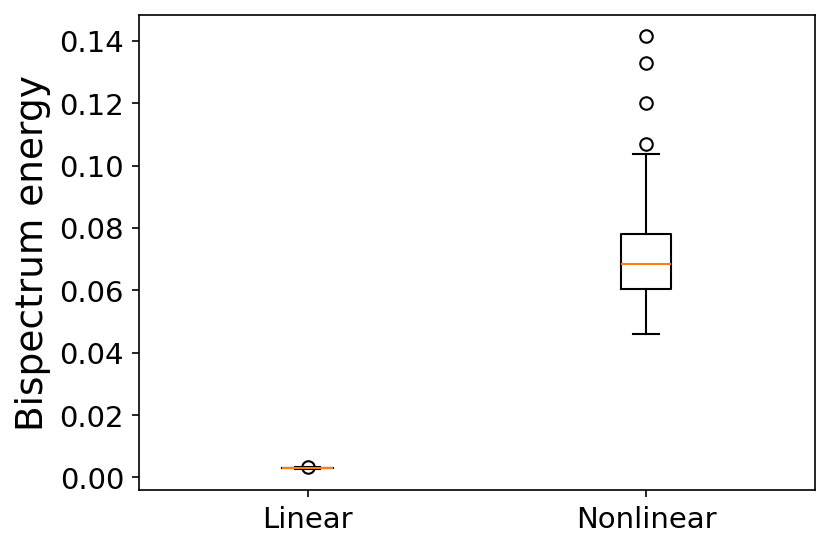}
\vspace{-0.2cm}
\caption{}
\label{fig:toy1c}
\end{subfigure}
\vspace{-0.2cm}
\caption{Synthetic data experiments: (a) Average spectral power, (b) graph bicoherence, and (c) normalized bispectral energy.}
\label{fig:toy1}
\end{figure*}
\subsection{Compact Graph Bicoherence}
The full graph bispectrum tensor $\bB$ is high-dimensional ($N^3$ entries) and challenging to estimate and interpret in practice. To obtain a compact and robust representation, we consider the quadratic signal $\bx \odot \bx$, which, when expanded in the graph Fourier basis, contains all pairwise interactions between graph modes. We then project this signal onto the graph Fourier basis to obtain a mode-wise representation of nonlinear interactions.
Based on this, we define a compact interaction statistic that aggregates all nonlinear contributions associated with each graph mode:
\begin{equation}
\label{eq:compact_statistic}
b_k = \mathbb{E}[\hat{x}_k \widehat{(x^2)}_k].
\end{equation}
We further define the normalized compact graph bicoherence
\begin{equation}
\gamma_k =
\frac{|b_k|}
{\sqrt{\mathbb{E}[|\hat{x}_k|^2]\;\mathbb{E}[|\widehat{(x^2)}_k|^2]}},
\end{equation}
which quantifies the normalized alignment between graph mode $k$ and its nonlinear interaction component. This construction gives a low-dimensional ($N$), scale-invariant representation of higher-order interactions that is both interpretable and computationally efficient.
Relationship between compact graph bicoherence with the full graph bispectrum is given by the following proposition.

\begin{prop}
The compact interaction statistic satisfies
\begin{equation}
\label{eq:compact_relation}
b_k = \sum_{i,j} C_{ijk} B_{ijk}.
\end{equation}
\end{prop}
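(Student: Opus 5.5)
The plan is to expand the Fourier coefficient of the quadratic signal $\bx\odot\bx$ in terms of the Fourier coefficients of $\bx$. The mode interaction coefficients $C_{ijk}$ should then appear as the mixing weights. After that, linearity of expectation turns $b_k$ into a weighted sum of bispectrum entries.

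\textbf{Step 1: vertex-domain form of the signal.} Using the inverse GFT $\bx=\bU\hat{\bx}$, each vertex value is $x(n)=\sum_{i=1}^N u_i(n)\hat{x}_i$. Squaring pointwise gives
\begin{equation*}
x(n)^2=\sum_{i,j} u_i(n)u_j(n)\hat{x}_i\hat{x}_j .
\end{equation*}

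\textbf{Step 2: project onto mode $k$.} The $k$th GFT coefficient of $\bx\odot\bx$ is $\widehat{(x^2)}_k=\bu_k^\top(\bx\odot\bx)=\sum_n u_k(n)x(n)^2$. Substituting Step 1 and swapping the finite sums over $n$ and over $(i,j)$ gives
\begin{equation*}
\widehat{(x^2)}_k=\sum_{i,j}\Big(\sum_n u_i(n)u_j(n)u_k(n)\Big)\hat{x}_i\hat{x}_j=\sum_{i,j}C_{ijk}\hat{x}_i\hat{x}_j .
\end{equation*}
This is consistent with the expansion $\bu_i\odot\bu_j=\sum_k C_{ijk}\bu_k$: since $\bU$ is orthonormal, $C_{ijk}=\bu_k^\top(\bu_i\odot\bu_j)$.

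\textbf{Step 3: take expectations.} Multiply by $\hat{x}_k$ and take expectations. The $C_{ijk}$ are deterministic and the sum is finite, so linearity of expectation gives
\begin{equation*}
b_k=\sum_{i,j}C_{ijk}\,\mathbb{E}[\hat{x}_i\hat{x}_j\hat{x}_k]=\sum_{i,j}C_{ijk}B_{ijk}.
\end{equation*}
The only hypothesis needed is that the third moments of $\bx$ are finite. Then every $B_{ijk}$ exists and the interchange is trivially justified.

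There is no real obstacle, but one point needs care: the setting must be real-valued. With a real orthonormal $\bU$, no complex conjugates appear in $\hat{x}_k$, in $\widehat{(x^2)}_k$, or in $C_{ijk}$, so the indices match exactly. In a complex basis such as the one used in Proposition~\ref{prop:relation}, conjugates would enter both $C_{ijk}$ and $\widehat{(x^2)}_k$. I will therefore state explicitly that the proof uses the real Laplacian eigenbasis of the undirected graph.
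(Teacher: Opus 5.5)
Your proof is correct and follows essentially the same route as the paper: the paper writes $\bx\odot\bx=\sum_{i,j}\hat{x}_i\hat{x}_j(\bu_i\odot\bu_j)$, expands each product via $\bu_i\odot\bu_j=\sum_k C_{ijk}\bu_k$, and then takes the GFT and expectations, while you project onto $\bu_k$ vertex-wise, which is the same computation. Your remarks that the third moments must be finite and that the argument uses the real eigenbasis are sensible refinements the paper leaves implicit.
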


\begin{proof}
Expanding $\bx = \sum_i \hat{x}_i \bu_i$, we obtain
\begin{equation*}
\bx^2 = \bx \odot \bx = \sum_{i,j} \hat{x}_i \hat{x}_j (\bu_i \odot \bu_j)
= \sum_{i,j,k} C_{ijk} \hat{x}_i \hat{x}_j \bu_k.
\end{equation*}
Taking the graph Fourier transform gives
\begin{equation}
\label{eq:squared_GFT}
\widehat{(x^2)}_k =  \sum_{i,j} C_{ijk} \hat{x}_i \hat{x}_j.
\end{equation}
Multiplying by $\hat{x}_k$ and taking expectations, we get \eqref{eq:compact_relation}.
\end{proof}


\subsection{Statistical Properties and Dynamical Interpretation}
\label{subsec:properties}
We next summarize basic statistical properties of the proposed quantities. These properties show that the graph bispectrum behaves analogously to the classical bispectrum while remaining well defined for graph signals.

\begin{property}
Let $\bx \in \mathbb{R}^N$ be a zero-mean Gaussian graph signal. Then
\begin{equation}
B_{ijk}=0, \quad \forall i,j,k, ~\mathrm{and~} b_k = 0, \quad \forall k.
\end{equation}
\end{property}

\begin{proof}
Since $\bx$ is Gaussian and the graph Fourier transform is linear, $\hat{\bx}=\bU^\top \bx$ is also a zero-mean Gaussian random vector. For any zero-mean Gaussian vector, all odd-order moments vanish. Hence
\begin{equation*}
\mathbb{E}[\hat{x}_i\hat{x}_j\hat{x}_k]=0,
\end{equation*}
which implies $B_{ijk}=0$. Using \eqref{eq:compact_relation}, we obtain $b_k=0$.
\end{proof}

\begin{property}
The normalized compact graph bicoherence $\gamma_k$ satisfies $0 \leq \gamma_k \leq 1$ and is invariant under scaling $\bx \mapsto \alpha \bx, ~ \forall \alpha \neq 0$.
\end{property}

\begin{property}
The normalized bispectral energy
\begin{equation}
E_{\mathrm{norm}} =
\frac{1}{N^3}
\sum_{i,j,k=1}^N
\frac{
\left(\mathbb{E}[\hat{x}_i\hat{x}_j\hat{x}_k]\right)^2
}
{
\mathbb{E}[\hat{x}_i^2]
\mathbb{E}[\hat{x}_j^2]
\mathbb{E}[\hat{x}_k^2]
}
\end{equation}
is nonnegative, scale-invariant, and vanishes for zero-mean Gaussian graph signals. This follows directly from the definition and properties of Gaussian moments.
\end{property}

Beyond their statistical interpretation, the proposed graph bispectral quantities also admit a natural dynamical interpretation in terms of nonlinear interactions between graph modes. Consider a nonlinear graph dynamical system of the form
\begin{equation}
\dot{\bx} = -\bL \bx + \beta (\bx \odot \bx) + \boldsymbol{\eta},
\end{equation}
where $-\bL \bx$ represents graph diffusion or linear propagation over the network, $\bx \odot \bx$ models nonlinear local interactions, $\beta$ controls the strength of nonlinear coupling,
and $\boldsymbol{\eta}$ denotes external input or stochastic noise. This can be viewed as a graph analogue of nonlinear reaction-diffusion systems, where activity evolves both through graph propagation and nonlinear self-interaction. Projecting the signal dynamics onto the graph Fourier basis gives
\begin{equation}
\dot{\hat{x}}_k = -\lambda_k \hat{x}_k + \beta \widehat{(x^2)}_k + \hat{\eta}_k,
\end{equation}
where $\widehat{(x^2)}_k = \bu_k^T (\bx \odot \bx)$ represents the nonlinear contribution to $k^{th}$ graph mode, as also given by \eqref{eq:squared_GFT}. This shows that the dynamics of graph mode $k$ depends not only on its linear diffusion term $-\lambda_k \hat{x}_k$, but also on nonlinear interactions between other graph modes.

Furthermore, the compact interaction statistic $b_k$ given by \eqref{eq:compact_statistic} measures the statistical coupling between graph mode $k$ and the nonlinear interaction term driving its dynamics. Large values of $b_k$ therefore indicate strong nonlinear mode coupling and nonlinear energy transfer into graph mode $k$.



\section{Experimental Results}
\label{sec:results}

\subsection{Synthetic Data}
\label{subsec:toy_example}

To validate the proposed graph bispectrum formulation, we first consider a synthetic random graph experiment comparing linear Gaussian signals with nonlinear signals generated via quadratic interactions. We create a $32$ node Erd\H{o}s-R\'enyi graph~\cite{ManChaSin18} with edge probability of 0.2. We generate low-pass filtered linear signals and non-linear signals as following
\begin{align*}
    \bx_{l} &= \bU h(\Lambda) \bz,\quad \bz \sim \mathcal{N}(\mathbf{0}, \mathbf{I}) \\
    \bx_{nl} &= \bx_{l} + \alpha (\bx_{l}  \odot \bx_{l} ).
\end{align*}
We choose $h(\lambda) = \exp{(- \tau \lambda)}$ with $\tau = 0.7$ and value of $\alpha$ is set to $0.7$. We used 400 realizations of these random graph signals for our analysis.

The graph spectrum (Figure~\ref{fig:toy1a}) shows that both linear and nonlinear signals exhibit nearly identical second-order statistics, confirming that energy distribution alone cannot distinguish between them. In contrast, the proposed compact graph bicoherence (Figure~\ref{fig:toy1b}) clearly separates the two cases, with significantly higher values for the nonlinear signal across all modes. This demonstrates that the proposed measure captures higher-order interactions that are not captured by covariance-based methods. The normalized nonlinear interaction energy (Figure~\ref{fig:toy1c}) further validates this effect, with a clear distinction between linear and nonlinear signals. These results validate that the graph bispectrum framework detects nonlinear dependencies even when second-order statistics are similar.
\begin{figure*}[h]
\vspace{-0.4cm}
\centering
\begin{subfigure}[t]{0.32\textwidth}
\centering	
\includegraphics[scale=0.33]{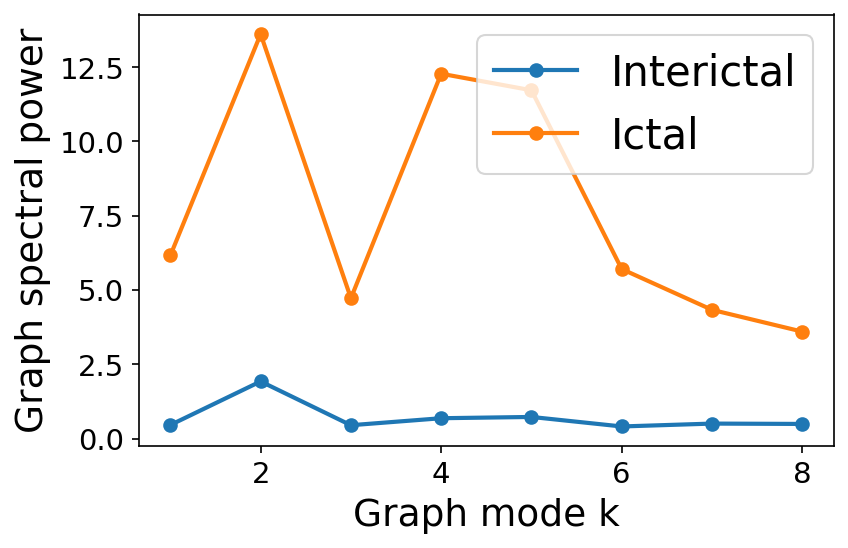}
\vspace{-0.2cm}
\caption{}
\label{fig:eeg1a}
\end{subfigure}
\begin{subfigure}[t]{0.32\textwidth}
\centering	
\includegraphics[scale=0.33]{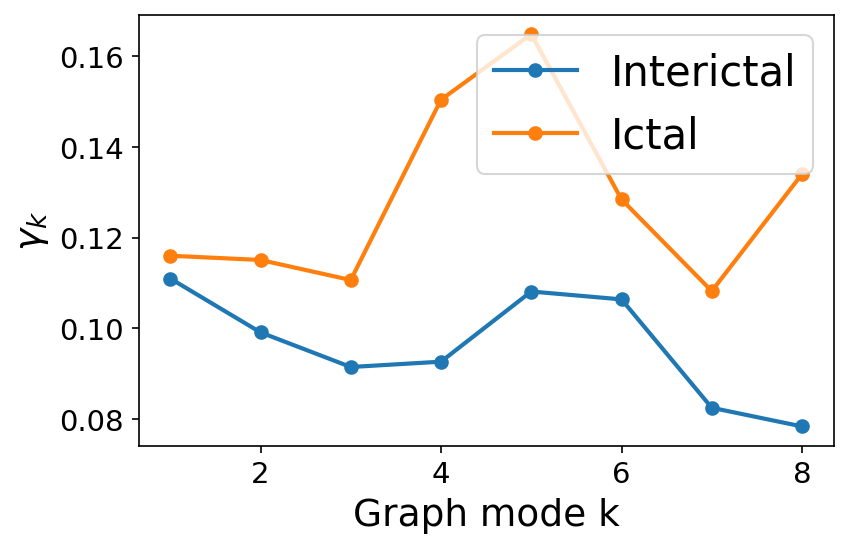}
\vspace{-0.2cm}
\caption{}
\label{fig:eeg1b}
\end{subfigure}
\begin{subfigure}[t]{0.32\textwidth}
\centering
\includegraphics[scale=0.33]{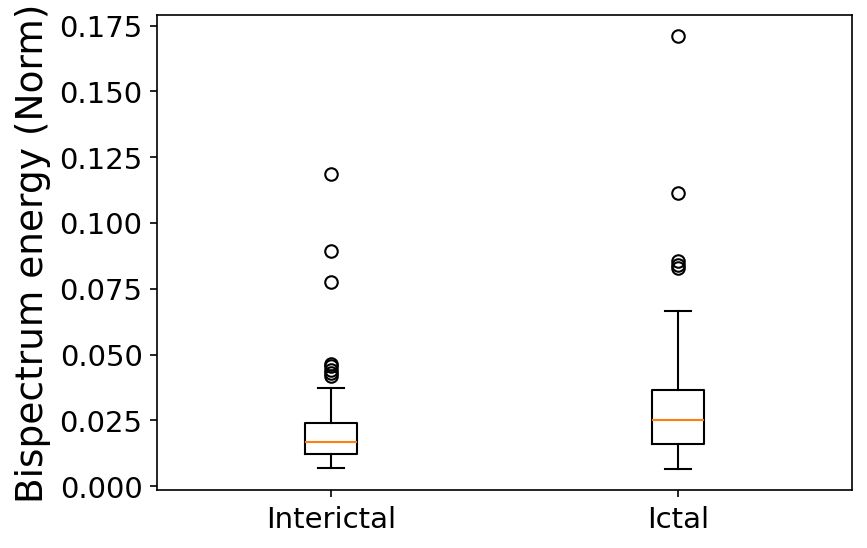}
\vspace{-0.2cm}
\caption{}
\label{fig:eeg1c}
\end{subfigure}
\vspace{-0.2cm}
\caption{EEG data experiments: (a) Average spectral power, (b) graph bicoherence, and (c) normalized bispectral energy.}
\label{fig:eeg1}
\end{figure*}
\subsection{Real-world Data}
\label{subsec:EEG}
We next evaluate the proposed graph bispectral measures on scalp EEG (electroencephalogram) recordings from the CHB-MIT Scalp EEG Database~\cite{MIT_CHB}, a widely used pediatric epilepsy dataset containing EEG recordings with annotated seizure intervals. The dataset consists of multi-channel scalp EEG recordings acquired at a sampling rate of 256 Hz using the international 10-20 electrode system. We analyzed subject chb01 (23 EEG channels), which contains multiple seizure recordings with annotated ictal intervals. For each seizure recording, EEG segments were partitioned into overlapping temporal windows of length 5 seconds with overlapping window of 2.5 seconds. Segments with annotated seizure intervals were labeled as ictal, while segments outside seizure intervals were labeled as interictal. To avoid contamination from seizure transitions and preictal activity, windows within 60 seconds of seizure onset or offset were excluded from the interictal set. A total of 80 ictal and 80 interictal segments were used for analysis. EEG recordings were bandpass filtered between 1-40 Hz and subsequently normalized channel-wise using z-score normalization. To represent EEG activity as a graph signal, we constructed a graph whose nodes correspond to EEG channels. We used pairwise Euclidean distances between channel locations and $k$-nearest-neighbor graph ($k=4$) was constructed. 

Figure~\ref{fig:eeg1a} shows the graph spectral power for ictal and interictal EEG windows across first $8$ graph modes. Ictal activity exhibits substantially increased graph spectral energy across nearly all low-frequency graph modes, reflecting stronger spatial synchronization and large-scale network activation during seizures. Figure~\ref{fig:eeg1b} shows the proposed compact graph bicoherence. Compared to interictal windows, ictal activity exhibits consistently elevated bicoherence across several graph modes. Because the compact graph bicoherence is normalized, the observed increases suggest enhanced nonlinear interactions between graph modes during seizure activity rather than merely reflecting overall signal amplitude differences. To quantify global higher-order interactions, we additionally computed the normalized graph bispectrum energy across graph modes. As shown in Figure~\ref{fig:eeg1c}, ictal windows exhibit substantially larger bispectral energy than interictal windows, even after normalization. This suggests that seizure dynamics are characterized not only by increased graph spectral power but also by stronger higher-order graph spectral coupling.

Overall, these results suggest that seizure dynamics are characterized not only by increased graph spectral power but also by enhanced nonlinear coupling between graph modes. The proposed graph bispectral quantities therefore provide complementary information beyond conventional graph spectral analysis and offer an interpretable representation of higher-order network interactions in EEG signals.
\section{Conclusion}
In this work, we introduced a graph bispectrum formulation for characterizing higher-order interactions in graph signals and proposed a compact graph bicoherence measure for capturing nonlinear graph mode coupling. We established key  properties and demonstrated through synthetic and EEG experiments that the proposed quantities detect nonlinear interactions beyond conventional graph spectral analysis. In particular, seizure EEG exhibited substantially increased graph bispectral coupling compared to interictal activity, suggesting enhanced higher-order coordination during seizures. Future work will explore graph neural network architectures that explicitly incorporate graph bispectral features, enabling learned models to capture not only pairwise message passing but also higher-order nonlinear interactions between graph modes.





\bibliography{references}{}
\bibliographystyle{IEEEtran}

\end{document}